\providecommand{\U}[1]{\protect\rule{.1in}{.1in}}
\newtheorem{theorem}{Theorem}
\newtheorem{lemma}{Lemma}
\newenvironment{proof}[1][Proof]{\noindent\textbf{#1.} }{\ \rule{0.5em}{0.5em}}
\begin{document}

\title{Conditional probability calculation using restricted Boltzmann machine with
application to system identification}
\author{Erick de la Rosa, Wen Yu\\{\small Departamento de Control Automatico}\\{\small CINVESTAV-IPN (National Polytechnic Institute)}\\{\small Mexico City, 07360, Mexico.\ yuw@ctrl.cinvestav.mx}}
\date{}
\maketitle

\begin{abstract}
There are many advantages to use probability method for nonlinear system
identification, such as the noises and outliers in the data set do not affect
the probability models significantly; the input features can be extracted in
probability forms. The biggest obstacle of the probability model is the
probability distributions are not easy to be obtained.

In this paper, we form the nonlinear system identification into solving the
conditional probability. Then we modify the restricted Boltzmann machine
(RBM), such that the joint probability, input distribution, and the
conditional probability can be calculated by the RBM training. Binary encoding
and continue valued methods are discussed. The universal approximation
analysis for the conditional probability based modelling is proposed. We use
two benchmark nonlinear systems to compare our probability modelling method
with the other black-box modeling methods. The results show that this novel
method is much better when there are big noises and the system dynamics are complex.

\textit{Keywords: system identification, conditional probability, restricted
Boltzmann machine}

\end{abstract}

\section{Introduction}

Data based system identification is to use the experimental data from system
input and output. Usually, the mathematical model of its input-output behavior
is applied to predict the system response (output) from the excitation (input)
to the system. There is always uncertainty when a model is chosen to represent
the system. Including probability theory in dynamic system identification can
improve the modeling capability with respect to noise and uncertainties
\cite{Ljung}. The common probability approaches for system identification
parameterize the model and apply Bayes' Theorem \cite{Pillonetto}, such as
maximizing the posterior \cite{Juloski}, maximizing the likelihood function
\cite{Schona}, or matching the output by least-squares method \cite{Lu}. There
are several computational difficulties with the above estimation methods, for
example, the probability models for the likelihood construction cannot be
expected to be perfect; the parameter estimations are often not unique,
because the true values of the parameters may not exist.

In the sense of probability theory, the objective of system modeling is to
find the best conditional probability distribution $P(y|\mathbf{x})$
\cite{Erhan1}, where $\mathbf{x}$ is the input and $y$ is the output. So the
system identification can be transformed to the calculation of the conditional
probability distribution, it is no longer the parameter estimation problem.
For nonlinear system identification, there are two correlated time series, the
input $\mathbf{x}$ and the output $y$. The popular Monte Carlo method cannot
obtain the best prediction of $y$ from the conditional probability between of
$\mathbf{x}$ and $y$ \cite{Damien}.

Recent results, show that restricted Boltzmann machine (RBM) \cite{Fisher} can
learn the probability distribution of the input data using the unsupervised
learning method, and obtains their hidden features \cite{Hinton}\cite{Bengio}.
These latent features help the RBM to obtain better representations of the
empirical distribution. Feature extractions and pre-training are two important
properties of the RBMs for solving classification problems in the past years
\cite{Hinton1}.

RBMs are also applied for data regression and time series modeling. With RBM
pre-training, the prediction accuracy of the time series can be improved
\cite{Zhang}\cite{Busseti}. The hidden and visible units of normal RBM are
binary. The prediction accuracy of the time series with continuous values are
not satisfied \cite{Langgkvist}. In \cite{Nair}, the binary units are replaced
by linear units with Gaussian noise. The denoising autoencoders are used for
continuous valued data in \cite{Romeuu}. In \cite{Shang}, the time series are
assumed to have the Gaussian property.

In order to find the relation between the input $\mathbf{x}$ and the output
$y$, the RBM is used to obtain the conditional probabilities between
$\mathbf{x}$ and the hidden units of the RBM in our previous papers
\cite{delarosa1}\cite{delarosa2}. The hidden units are used as the initial
weights of neural networks, then the supervised learning is implemented to
obtain the inpu-output relation, $y=f\left(  \mathbf{x}\right)  $. To the best
of our knowledge, conditional probability approach for system identification
has not been still applied in the literature.

In this paper, we use the conditional probability to model the nonlinear
system by RBM training. The RBMs are modified, such that the input
distribution, the joint probability $P(y,x),$ and the conditional probability
$P(y|x)$ can be calculated. We proposed two probability calculation methods:
binary encoding and continuous values. Probability gradient algorithm is used
to maximize the log-likelihood of the conditional probability between the
input and output. The comparisons with the other black-box identification
methods are carried out using two nonlinear benchmark systems.

\section{Nonlinear system modeling with conditional probability}

We use the following difference equation to describe a discrete-time nonlinear
system,%
\begin{equation}
y(k)=f\left[  \mathbf{x}\left(  k\right)  ,\xi\left(  k\right)  ,\cdots
\xi\left(  k-n_{\xi}\right)  \right]  \label{planta}%
\end{equation}
where $\xi\left(  k\right)  $ are noise sequences, $f\left(  \cdot\right)  $
is an unknown nonlinear function,
\begin{equation}
\mathbf{x}\left(  k\right)  =[y\left(  k-1\right)  ,\cdots y\left(
k-n_{y}\right)  ,u\left(  k\right)  ,\cdots u\left(  k-n_{u}\right)  ]^{T}
\label{X}%
\end{equation}
representing the plant dynamics, $u\left(  k\right)  $ and $y\left(  k\right)
$ are the measurable input and output of the nonlinear plant, $n_{y}$ and
$n_{u}$ correspond to the system order, $n_{\xi}$ is the maximum lag for the
noise. $\mathbf{x}\left(  k\right)  \in\Re^{n}$ can be regarded as a new input
to the nonlinear function $f\left(  \cdot\right)  .$ It is the well known
NARMAX model \cite{Billings}.

The objective of the system modeling is to use the input-output data, to
construct a model $\hat{y}(k)=\hat{f}\left[  \mathbf{x}\left(  k\right)
\right]  ,$ such that $\hat{y}(k)\rightarrow y(k).$ If the lost function is
defined by%
\[
L=\sum_{k}\left[  y(k)-\hat{f}\left[  \mathbf{x}\left(  k\right)  \right]
\right]  ^{2}%
\]
The mathematical expectation of the modeling error is%
\begin{equation}
E\left\{  \left[  y(k)-\hat{f}\left[  \mathbf{x}\left(  k\right)  \right]
\right]  ^{2}\right\}  =\int\left[  y-\hat{f}\left(  x\right)  \right]
^{2}p\left(  dx,dy\right)  \label{me1}%
\end{equation}
where the joint probability satisfies
\[
p\left(  X,Y\right)  =p\left(  X\right)  p\left(  Y\mid X\right)
\]
Here $p\left(  Y\mid X\right)  $ is the conditional probability. The modeling
error (\ref{me1}) becomes%
\[%
\begin{array}
[c]{c}%
E\left\{  \left[  y(k)-\hat{f}\left[  \mathbf{x}\left(  k\right)  \right]
\right]  ^{2}\right\}  =\int\left[  y-\hat{f}\right]  ^{2}p\left(  dx\right)
p\left(  dy\mid dx\right) \\
=E_{\mathbf{x}}E_{\left(  y\mid\mathbf{x}\right)  }\left\{  \left[  y-\hat
{f}\right]  ^{2}\mid\mathbf{x}\right\}
\end{array}
\]
The system identification becomes the minimization problem as%
\[
\min E\left\{  \left[  y-\hat{f}\right]  ^{2}\right\}  \rightarrow\min
_{\hat{f}}E_{\left(  y\mid\mathbf{x}\right)  }\left\{  \left[  y-\hat
{f}\right]  ^{2}\mid\mathbf{x}\right\}
\]
The best prediction of $y$ at $\mathbf{x}=x$ is the conditional mean
(conditional expectation) as%
\[
\hat{f}\left[  \mathbf{x}\left(  k\right)  \right]  =E\left\{  \hat{y}%
\mid\mathbf{x}\right\}
\]
The nonlinear system modeling becomes
\begin{equation}
\max_{\theta}\left\{  p\left(  y\mid\mathbf{x},\Lambda\right)  \right\}
\label{modeling}%
\end{equation}
where $\Lambda$ is the parameter vector of the model $\hat{f}.$ The existing
methods use some probability approaches, such as Bayes' Theorem, to estimate
$\Lambda$. In this paper, we do not use parameterized models. We will
calculate the conditional probability distribution $p\left[  y\left(
k\right)  |\mathbf{x}\left(  k\right)  \right]  $ directly.

The loss function for the conditional distribution $p(y|\mathbf{x})$ is
defined as
\begin{equation}
J_{o}(D)=\sum_{D}\log p\left(  y|\mathbf{x}\right)  \label{logJ}%
\end{equation}
where $D=\{\mathbf{x}(k),y(k)\},$ is the training set, $\mathbf{x}(k)$ and
$y(k)$ are the $k$-th training input vector and output respectively. So the
object of the nonlinear system identification (\ref{modeling}) becomes
\begin{equation}
\max_{\Lambda}\left[  \sum_{D}\log p\left(  y|\mathbf{x}\right)  \right]
\label{maxJ}%
\end{equation}

In this paper, we use the restricted Boltzmann machine (RBM) to obtain the
best conditional probability. RBM is a stochastic neural network. It can learn
the probability distribution of given data set. The input (or the visible
nodes) to the RBM is $\mathbf{x}\left(  k\right)  =\left[  x_{1}\cdots
x_{n}\right]  \in R^{n}$. The output (or the hidden nodes) of the RBM is
$h=\left[  \bar{h}_{1}\cdots\bar{h}_{s}\right]  \in R^{s}$. For the $i-th$
hidden node and the $j-th$ visible node, the conditional probabilities are
calculated as%
\begin{equation}%
\begin{array}
[c]{l}%
p\left(  \bar{h}_{i}=1\mid\mathbf{x}\right)  =\phi\left[  W\mathbf{x}+b\right]
\\
p\left(  x_{j}=1\mid h\right)  =\phi\left[  W^{T}h+c\right]
\end{array}
\label{rbm1}%
\end{equation}
where $\bar{h}_{i}=\left\{
\begin{array}
[c]{cc}%
1 & a<p\left(  \bar{h}_{i}=1\mid\mathbf{x}\right) \\
0 & a\geq p\left(  \bar{h}_{i}=1\mid\mathbf{x}\right)
\end{array}
\right.  $, $\phi$ is the Sigmoid function, $W$ is the weight matrix, $a$ is a
number sampled from the uniform distribution over $[0,1]$, $b$ and $c$ are
visible and hidden biases respectively, $i=1,\ldots,s$, $j=1,\ldots,n.$ The
probability vector $h$ is defined as
\[
h=\left[  p\left(  \bar{h}_{1}=1\mid\mathbf{x}\right)  \cdots p\left(  \bar
{h}_{s}=1\mid\mathbf{x}\right)  \right]  =\left[  h_{1}\cdots h_{s}\right]
\]

The training object of the RBM is to modify the parameters $\left[
W,b,c\right]  ,$ such that the probability distribution of the hidden units
$p$ is near to the distribution of the input $q$. We use Kullback-Liebler (KL)
divergence to measure the distance between two probability distributions $p$
and
\begin{equation}
KL\left(  p,q\right)  =\sum_{D}q\left(  \mathbf{x}\right)  \log\frac{q\left(
\mathbf{x}\right)  }{p\left(  \mathbf{x}\right)  }=\sum_{D}q\left(  x\right)
\log q\left(  x\right)  -\sum_{D}q\left(  x\right)  \log p\left(  x\right)
\label{kl}%
\end{equation}
Here the first term $\sum_{x}q\left(  x\right)  \log q\left(  x\right)  $ is
the entropy of the input. It is independent of the RBM training.

The RBM training object becomes
\begin{equation}
\min KL\left(  p,q\right)  \rightarrow\max\sum_{D}q\left(  x\right)  \log
p\left(  x\right)  \label{rbmO}%
\end{equation}
$q\left(  x\right)  $ cannot be obtained directly, it is estimated by Monte
Carlo method%
\[
\sum_{D}q\left(  x\right)  \log p\left(  x\right)  \approx\frac{1}{n}\sum
_{D}\log p\left(  x\right)
\]
where $n$ is the number of training data. So%
\[
\min KL\left(  p,q\right)  \rightarrow\max\sum_{x}\log p\left(  x\right)
\]

The following lemma and theorem give the universal approximation of the
nonlinear system (\ref{planta}) with the conditional probability
$p(y|\mathbf{x})$ and the RBM.

\begin{lemma}
Any marginal probability distribution $p(\mathbf{x}),$ $\mathbf{x}%
\in\{0,1\}^{r}$, can be approximated arbitrarily well in the sense of the KL
divergence\ (\ref{kl}) by then RBM with $r+1$ hidden units, where $r$ is the
number of input vector whose probability is not $0$ \cite{Leroux}.
\end{lemma}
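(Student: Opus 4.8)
The plan is to exhibit, for any target $\epsilon>0$, an explicit RBM whose marginal $p(\mathbf{x})$ satisfies $KL(q,p)<\epsilon$, where $q$ is the given distribution supported on the $r$ points $\mathbf{x}^{(1)},\dots,\mathbf{x}^{(r)}$ with masses $q_{1},\dots,q_{r}>0$ (I write $n$ for the visible dimension and $r$ for the number of support points, to disambiguate). First I would write the RBM marginal in its free-energy product form, obtained by summing the energy over the binary hidden units,
\begin{equation}
p(\mathbf{x})=\frac{1}{Z}\,e^{b^{T}\mathbf{x}}\prod_{i=1}^{s}\left(1+e^{c_{i}+w_{i}^{T}\mathbf{x}}\right),\qquad Z=\sum_{\mathbf{x}\in\{0,1\}^{n}}e^{b^{T}\mathbf{x}}\prod_{i=1}^{s}\left(1+e^{c_{i}+w_{i}^{T}\mathbf{x}}\right),
\label{freeenergy}
\end{equation}
where $w_{i}$ is the weight row of the $i$-th hidden unit and $c_{i}$ its bias. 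The key observation is that each hidden unit contributes one independently controllable factor $1+e^{c_{i}+w_{i}^{T}\mathbf{x}}$, which I will turn into a ``soft indicator'' of a single support point.

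Next I would realize each soft indicator geometrically. Because every $\mathbf{x}^{(\ell)}\in\{0,1\}^{n}$ is an extreme vertex of the hypercube, it is strictly separated from the convex hull of all other vertices; concretely $w_{\ell}=\beta\,(2\mathbf{x}^{(\ell)}-\mathbf{1})$ gives $w_{\ell}^{T}\mathbf{x}^{(\ell)}-w_{\ell}^{T}\mathbf{x}\ge\beta$ for every $\mathbf{x}\neq\mathbf{x}^{(\ell)}$. Choosing $c_{\ell}=-\beta\bigl(\|\mathbf{x}^{(\ell)}\|_{1}-\tfrac12\bigr)+\delta_{\ell}$ makes the pre-activation of unit $\ell$ equal to $\tfrac{\beta}{2}+\delta_{\ell}$ at $\mathbf{x}^{(\ell)}$ and at most $-\tfrac{\beta}{2}+\delta_{\ell}$ at every other vertex. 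Hence, as $\beta\to\infty$, the $\ell$-th factor behaves like $e^{\beta/2+\delta_{\ell}}$ at $\mathbf{x}^{(\ell)}$ and like $1+O(e^{-\beta/2})$ everywhere else. Taking $b=0$ and one unit per support point, the unnormalized weight of $\mathbf{x}^{(\ell)}$ is $e^{\beta/2}e^{\delta_{\ell}}\bigl(1+O(e^{-\beta/2})\bigr)$; the common factor $e^{\beta/2}$ cancels upon normalization, so the finite offsets $\delta_{\ell}$ alone fix the relative masses. I would therefore set $\delta_{\ell}=\log q_{\ell}$, which drives $p(\mathbf{x}^{(\ell)})\to q_{\ell}$.

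Then I would make the limit precise. At the non-support vertices all $r$ units are off, so their total unnormalized weight is $O(2^{n})$, negligible against the support weight $\sim e^{\beta/2}$; thus the mass leaked off the support tends to $0$ and $Z=e^{\beta/2}\bigl(1+O(2^{n}e^{-\beta/2})\bigr)$. Since $KL(q,p)=\sum_{\ell}q_{\ell}\log\bigl(q_{\ell}/p(\mathbf{x}^{(\ell)})\bigr)$ involves the model only at the support points, and each $p(\mathbf{x}^{(\ell)})\to q_{\ell}>0$, continuity of $t\mapsto\log t$ away from $0$ gives $KL(q,p)\to0$ as $\beta\to\infty$. Picking $\beta$ large enough to make this below $\epsilon$ finishes the construction; the $(r+1)$-th hidden unit is left inert (or used to pin the global scale), so the hidden-unit budget stays within the stated $r+1$.

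The step I expect to be the main obstacle is the uniform control of the cross-contamination and leakage terms in the last paragraph: each soft indicator is not a true indicator, so at every support point the $r-1$ foreign factors each perturb the weight by $1+O(e^{-\beta/2})$ while the non-support vertices carry a small residual mass, and all of these errors must be shown to vanish simultaneously while the intended masses $q_{\ell}$ stay fixed. The cleanest way to organize this, rather than a single simultaneous limit, is the inductive ``add-one-unit'' argument of \cite{Leroux}: one shows that appending a single hidden unit multiplies the probability of one chosen vertex by an arbitrary factor $\ge1$ while rescaling all others uniformly, and then fixes the support masses one at a time. This isolates the error analysis to a single factor at each step and yields the bound $r+1$ directly.
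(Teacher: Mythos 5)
The first thing to note is that the paper contains no proof of this lemma: it is imported verbatim from Le Roux and Bengio \cite{Leroux}, and the paper's own work (Theorem 1) merely invokes it. So your proposal can only be compared against that cited source, not against anything in the paper itself. Your construction is correct and self-contained, and it is essentially a parallelized version of the Le Roux--Bengio argument rather than a reproduction of it: where they proceed inductively, adding one hidden unit at a time so that each new unit multiplies the mass of one chosen vertex by an arbitrary factor while leaving the others essentially untouched, you place all $r$ soft vertex-indicators at once and take a single limit $\beta\to\infty$. Your free-energy product form, the separation bound $w_{\ell}^{T}\mathbf{x}^{(\ell)}-w_{\ell}^{T}\mathbf{x}\ge\beta$, the bias choice giving pre-activations $\beta/2+\delta_{\ell}$ versus $\le-\beta/2+\delta_{\ell}$, and the choice $\delta_{\ell}=\log q_{\ell}$ all check out. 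Moreover, the obstacle you flag in the last paragraph --- simultaneous control of cross-contamination and leakage --- is not actually an obstacle, so your deferral to the inductive organization of \cite{Leroux} is unnecessary: since the dimension $n$, the support size $r$, and the masses $q_{\ell}$ are all fixed and only $\beta$ varies, the foreign factors at each support point contribute $\left(1+O(e^{-\beta/2})\right)^{r-1}$, the non-support vertices contribute $O(2^{n})$ to $Z$ against the dominant term $e^{\beta/2}$, and hence every error is $O\left(\mathrm{poly}(2^{n})\,e^{-\beta/2}\right)$; a single limit kills them all simultaneously, and $KL(q,p)\to0$ follows from continuity exactly as you say. Two further remarks: your construction in fact uses only $r$ active hidden units (the $(r+1)$-th can be made inert by setting its weights to zero, so its factor is constant and cancels in $Z$), which is consistent with --- indeed slightly sharper than --- the stated bound; and what the inductive route buys instead is a modular error analysis (one factor per step) that extends more readily to statements about incrementally improving a given RBM, which is how \cite{Leroux} phrases their building block. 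For the lemma as stated, your one-shot argument is complete.
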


\begin{theorem}
If the hidden units of the RBM (\ref{rbm1}) is big enough, the conditional
probability $p(y|\mathbf{x})$ of the nonlinear system (\ref{planta}) can be
approximated arbitrarily well by an RBM and the pair $(\mathbf{x},y),$ in the
sense of the KL divergence\ (\ref{kl}).
\end{theorem}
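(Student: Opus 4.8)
The plan is to reduce the conditional approximation problem to a \emph{joint} approximation problem, to which Lemma~1 applies directly, and then to pass from the joint to the conditional by the chain rule for the KL divergence. First I would form the augmented binary vector $z=(\mathbf{x},y)\in\{0,1\}^{n+m}$ (using the binary encoding discussed above), regard it as the visible layer of a single RBM, and let its data distribution be the true joint $q(\mathbf{x},y)$. Since $z$ ranges over a finite binary cube, Lemma~1 applies verbatim to this joint: for any $\varepsilon>0$ there is an RBM with enough hidden units (at most one more than the number of pairs $(\mathbf{x},y)$ of nonzero probability) whose model distribution $p(\mathbf{x},y)$ satisfies $KL(p,q)=\sum q(\mathbf{x},y)\log\frac{q(\mathbf{x},y)}{p(\mathbf{x},y)}<\varepsilon$.

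The central step is the chain-rule decomposition of the joint divergence. Writing $q(\mathbf{x},y)=q(\mathbf{x})q(y\mid\mathbf{x})$ and $p(\mathbf{x},y)=p(\mathbf{x})p(y\mid\mathbf{x})$ inside the logarithm and separating the two factors gives
\[
KL\left(p(\mathbf{x},y),q(\mathbf{x},y)\right)=KL\left(p(\mathbf{x}),q(\mathbf{x})\right)+\sum_{\mathbf{x}}q(\mathbf{x})\,KL\left(p(y\mid\mathbf{x}),q(y\mid\mathbf{x})\right).
\]
Both terms on the right are non-negative, so the input-averaged conditional divergence is controlled by the joint divergence:
\[
\sum_{\mathbf{x}}q(\mathbf{x})\,KL\left(p(y\mid\mathbf{x}),q(y\mid\mathbf{x})\right)\leq KL\left(p(\mathbf{x},y),q(\mathbf{x},y)\right)<\varepsilon.
\]
Hence the RBM conditional $p(y\mid\mathbf{x})$ approximates the true $q(y\mid\mathbf{x})$ arbitrarily well in the averaged KL sense, which is the statement of the theorem.

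I would close by tying the guarantee to the training criterion: the averaged conditional divergence above equals $-J_{o}(D)$ of (\ref{logJ}) plus the conditional entropy of the data, a constant independent of the RBM parameters. Thus minimizing the conditional KL is equivalent to maximizing the log-likelihood objective (\ref{maxJ}), so what the RBM is trained to do is exactly what the approximation bound certifies it can achieve.

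The main obstacle is keeping the chain-rule identity rigorous while handling two degenerate cases. Inputs $\mathbf{x}$ with $q(\mathbf{x})=0$ make $q(y\mid\mathbf{x})$ undefined, but they carry zero weight in both sums and may be discarded. More importantly, each conditional $p(y\mid\mathbf{x})$ must be well defined, which requires the RBM marginal $p(\mathbf{x})$ to stay strictly positive on the support of $q$; this positivity is inherited from the sigmoid parametrization (\ref{rbm1}), which assigns nonzero probability to every visible configuration, so no division by zero arises. A secondary point worth a remark is that Lemma~1 is phrased for a marginal $p(\mathbf{x})$, and I must simply observe that concatenating $(\mathbf{x},y)$ into one visible vector turns the joint into such a marginal, so the lemma transfers with no additional argument.
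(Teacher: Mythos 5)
Your proposal is correct, and its skeleton coincides with the paper's: concatenate $(\mathbf{x},y)$ into a single binary visible vector, invoke Lemma~1 to approximate the joint distribution by an RBM with sufficiently many hidden units, and then pass from the joint to the conditional. The genuine difference lies in how that last passage is justified. The paper inverts the relation $p(y_{l}|\mathbf{x}_{k})=p(\mathbf{x}_{k},y_{l})/\sum_{j}p(\mathbf{x}_{k},y_{j})$ to reconstruct the joint values from the prescribed conditionals (its equation (\ref{ecSis})) and then asserts that an RBM modeling this joint ``contains'' the desired conditional; it never establishes any quantitative link between the KL error on the joint and the KL error on the conditional. Your chain-rule decomposition $KL\left(p(\mathbf{x},y),q(\mathbf{x},y)\right)=KL\left(p(\mathbf{x}),q(\mathbf{x})\right)+\sum_{\mathbf{x}}q(\mathbf{x})\,KL\left(p(y|\mathbf{x}),q(y|\mathbf{x})\right)$ supplies exactly that link: nonnegativity of both terms bounds the input-averaged conditional divergence by the joint divergence, so the theorem's conclusion follows rigorously from Lemma~1. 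This is precisely the step the paper glosses over, and your treatment of the degenerate cases ($q(\mathbf{x})=0$, and strict positivity of the RBM marginal --- which, to be precise, follows from the Boltzmann form $p\propto e^{-E}>0$ rather than from the sigmoid formulas (\ref{rbm1})) is what makes the argument airtight. What the paper's version buys instead is explicit hidden-unit bookkeeping ($r_{0}\times m_{0}+1$ units, and the corresponding counts after $m$-bit encoding), which you subsume into ``enough hidden units''. One small slip in your closing remark: the averaged conditional KL equals the negative normalized conditional log-likelihood \emph{minus} (not plus) the conditional entropy of the data; the substantive point --- that the entropy term is parameter-independent, so minimizing the conditional KL is equivalent to maximizing (\ref{maxJ}) --- is unaffected.
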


\begin{proof}
Consider the nonlinear system (\ref{planta}), the vectors $\mathbf{x}$ and $y$
are from the finite sets $\left\{  \mathbf{x}_{1},...,\mathbf{x}%
_{k,}...,\mathbf{x}_{r_{0}}\right\}  $ and $\left\{  y_{1},...,y_{l,}%
...,y_{m_{0}}\right\}  $. For each pair $\left(  \mathbf{x}_{k},y_{l}\right)
$, the conditional probability distribution is $p(y_{l}|\mathbf{x}_{k}),$
$k=1,...,r_{0}$ and $l=1,...m_{0}$. The conditional probability of the pair is%
\begin{equation}
p(y_{l}|\mathbf{x}_{k})=\frac{p(\mathbf{x}_{k},y_{l})}{p(\mathbf{x}_{k}%
)}=\frac{p(\mathbf{x}_{k},y_{l})}{\sum_{j}p(\mathbf{x}_{k},y_{j})}
\label{conGen}%
\end{equation}
Here the value of $p(y_{l}|\mathbf{x}_{k})$ is assumed to be known. The term
$\sum_{j}p(\mathbf{x}_{k},y_{j})$ can be separated as $\sum_{j\neq
l}p(\mathbf{x}_{k},y_{j})$ and $p(\mathbf{x}_{k},y_{l}).$ From (\ref{conGen}),%
\begin{equation}
p(\mathbf{x}_{k},y_{l})=\frac{p(y_{l}|\mathbf{x}_{k})\sum_{j\neq
l}p(\mathbf{x}_{k},y_{j})}{1-p(y_{l}|\mathbf{x}_{k})} \label{ecSis}%
\end{equation}
For each pair $\left(  \mathbf{x}_{k},y_{l}\right)  $, $p(\mathbf{x}_{k}%
,y_{l})$ can be calculated for the indexes $k$ and $l$ from (\ref{ecSis}).
There are $r_{0}\times m_{0}$ equations. They are considered as the desired
conditional distribution of $p(y|\mathbf{x})$, \textit{i.e.}, we can use
$p(\mathbf{x}_{k},y_{l})$ to create the distribution which contains the
original conditional distribution $p(y|\mathbf{x})$. Now we apply all pairs
$\left(  \mathbf{x}_{k},y_{l}\right)  ,$ $k=1,...,r_{0}$ and $l=1,...m_{0},$
to the RBM whose hidden unit number is $r_{0}\times m_{0}+1.$

If $\mathbf{x}_{k}$ and $y_{l}$ are binary values, $\mathbf{x}\in\{0,1\}^{m},$
$y\in\{0,1\}^{r},$ we define the pair $(\mathbf{x}_{k},y_{l})$ as a single
random variable $z_{kl},$ $p(z_{kl})=p(\mathbf{x}_{k},y_{l}).$ From Lemma 1,
we can construct an RBM with the most $r_{0}\times m_{0}+1$ hidden units (all
input nodes have non-zero probability), which models the distribution
$p(z_{kl})$, with the desired conditional distribution of $p(y|\mathbf{x}).$

If $\mathbf{x}_{k}$ and $y_{l}$ are not binary values, we encode the input
variable $\mathbf{x}$ and $y$ into binary values $\left\{  0,1\right\}  $ with
the resolution of $m$ bits. This means that $y\left(  k\right)  $ is encoded
into $2^{m}$ different levels with the step of $1/(2^{m}-1).$ Similarly, the
control $\mathbf{x}\left(  k\right)  $ is encoded into $2^{m}$ different
levels,%
\begin{equation}%
\begin{array}
[c]{l}%
\mathbf{x}\in\Re^{n}\longrightarrow\mathbf{x}\in\{0,1\}^{n\times m}\\
y\in\Re\longrightarrow y\in\{0,1\}^{m}%
\end{array}
\label{bin}%
\end{equation}
We can construct an RBM with at most $\left(  n\times m\times r_{0}\right)
\times\left(  m\times m_{0}\right)  +1$ hidden units. The desired conditional
distribution of $p(y|\mathbf{x})$ can be approximated arbitrarily well in the
sense of the KL divergence.
\end{proof}

The above theorem can be regarded as the probability version of the universal
approximation theory of neural networks \cite{Cybenko}\cite{Huang}. One RBM
with one hidden layer can learn the probability $p(y|\mathbf{x})$ of the
nonlinear system in any accuracy, the hidden node number of the RBM should be
the same as total data number. This cause serious over-fitting and
computational problems.

In order to improve the approximation accuracy, we can use several RBMs with
cascade connection. It is called deep Boltzmann machines (or deep belief nets)
\cite{Hinton2}. Instead of using more hidden nodes, it use more layers (or
more RBMs) to learn the probability $p(y|\mathbf{x})$ to avoid the above
problems. This architecture is shown in Figure \ref{f1}. Each RBM has the form
of (\ref{rbm1}). The output of the current RBM is its hidden vector $h$. It is
the input of the next RBM.%

\begin{figure}
[ptb]
\begin{center}
\includegraphics[
height=1.6933in,
width=3.7395in
]%
{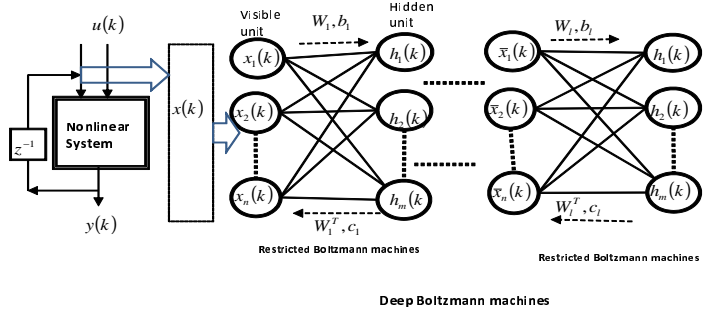}%
\caption{Deep Boltzmann machines for conditional probability of nonlinear
system}%
\label{f1}%
\end{center}
\end{figure}

\section{Joint distribution for nonlinear system identification}

Because the conditional probability%
\begin{equation}
p(y|\mathbf{x})=\frac{p(\mathbf{x,}y)}{p(\mathbf{x})},\quad\log p(y\mathbf{|x}%
)=\log p(\mathbf{x,}y)-\log p(\mathbf{x}) \label{rep}%
\end{equation}
The object of system identification (\ref{maxJ}) can also be formed as%
\begin{equation}
\max_{\Lambda}\left[  \sum_{D}\log p(\mathbf{x,}y)\right]  \text{ and }%
\min_{\Lambda}\left[  -\sum_{D}\log p(\mathbf{x})\right]  \label{maxJ1}%
\end{equation}

The identification object (\ref{maxJ1}) is an alternative method of the
conditional probability (\ref{maxJ}), which needs the joint distribution
$\sum_{D}\log p(\mathbf{x,}y)$ and the input distribution $\sum_{D}\log
p(\mathbf{x})$. It is more easy to calculate the joint distribution than the
conditional probability. However, it is impossible to optimize $\sum_{D}\log
p(\mathbf{x,}y)$ and $\sum_{D}\log p(\mathbf{x})$ with $D=\{\mathbf{x}%
(k),y(k)\}$ at same time. We can use part of the training data $\mathbf{x}$ to
minimize $\sum_{D}\log p(\mathbf{x})$ first, then use the rest of data to
maximize the joint probability $\sum_{D}\log p(\mathbf{x,}y)$.

In our previous works \cite{delarosa1}\cite{delarosa2}, we first use the input
data $\mathbf{x}$ to pre-train the RBM. The training results are used as the
initial weights of neural networks. Then we used the supervised learning to
train the neural model. In this paper, after the unsupervised learning for
$\sum_{D}\log p(\mathbf{x}),$ we train the RBM to obtain $\max_{\Lambda
}\left[  \sum_{D}\log p(\mathbf{x,}y)\right]  .$ For the nonlinear system
identification, its a sub-optimization process.

\subsection{Pre-training for $p(\mathbf{x})$}

The goal of the unsupervised training is to obtain $\min_{\Lambda}\left[
\sum_{D}\log p(\mathbf{x})\right]  $ by reconstructing the RBM$\mathbf{.}$ The
parameters $\Lambda=\left[  W,b,c\right]  $ are trained such that $h$ is the
representation of $\mathbf{x}$ (feature extraction). The probability
distribution $p\left(  \mathbf{x}\right)  $ is the following energy-based
model%
\begin{equation}
p(\mathbf{x})=\sum_{h}p(\mathbf{x},h)=\sum_{h}\frac{e^{-E(\mathbf{x},h)}}{Z}
\label{engfun}%
\end{equation}
where $Z=\sum_{h}\sum_{\mathbf{x}}e^{-E(\mathbf{x},h)}$ denotes the sums over
all possible values of $h$ and $\mathbf{x,}$ $E(x,h)$ is the energy function
which is defined by%
\begin{equation}
E\left(  \mathbf{x},h\right)  =-c^{T}\mathbf{x}-b^{T}h-h^{T}W\mathbf{x}
\label{eng}%
\end{equation}
The loss function for the training is
\[
L\left(  \Lambda\right)  =\log%
{\displaystyle\prod\limits_{\mathbf{x}}}
p\left(  \mathbf{x}\right)  =\log\left[  \sum_{\mathbf{x}}e^{-E\left(
\mathbf{x},h\right)  }\right]  -\log\left[  \sum_{\mathbf{x},h}e^{-E\left(
\mathbf{x},h\right)  }\right]
\]
The weights are updated as
\begin{equation}
\Lambda\left(  k+1\right)  =\Lambda\left(  k\right)  -\eta_{1}\frac
{\partial\left[  -\log p\left(  \mathbf{x}\right)  \right]  }{\partial\Lambda}
\label{up3}%
\end{equation}
where $\eta_{1}$ is the learning rate, $\frac{\partial\log p\left(
\mathbf{x}\right)  }{\partial\Lambda}=\sum_{x}p\left(  \mathbf{x}\right)
\frac{\partial\digamma(\mathbf{x})}{\partial\Lambda}-\frac{\partial
\digamma(\mathbf{x})}{\partial\Lambda},$ $\digamma(\mathbf{x})$ is the free
energy defined as
\[
\digamma(\mathbf{x})=\sum_{\mathbf{x}}\log p\left(  \mathbf{x}\right)
=-c^{T}\mathbf{x}-\sum_{p=1}^{l_{i}}\log\sum_{h_{p}}e^{h_{p}\left(
b_{p}+W_{p}\mathbf{x}\right)  }%
\]
Here $\sum_{z}p\left(  \mathbf{x}\right)  \frac{\partial\digamma(\mathbf{x}%
)}{\partial\Lambda}$ is estimated by the Monte Carlo sampling,%
\begin{equation}
\sum_{z}p\left(  \mathbf{x}\right)  \frac{\partial\digamma(\mathbf{x}%
)}{\partial\Lambda}\thickapprox\frac{1}{s}\sum_{z\in S}\frac{\partial
\digamma(\mathbf{x})}{\partial\Lambda} \label{mc}%
\end{equation}
The above algorithm is for one RBM. The training process of multiple RBMs is
shown in Figure \ref{f2}. After the first model is trained, their weights are
fixed. The codes or hidden representations of the first model are sent to the
second model. The second model is trained using input $h_{1}\left(  k\right)
\in\Re^{l_{1}}$ and it generates its hidden representations $h_{2}\left(
k\right)  \in\Re^{l_{2}},$ which is the input of the third model.%

\begin{figure}
[ptb]
\begin{center}
\includegraphics[
height=2.5106in,
width=2.2225in
]%
{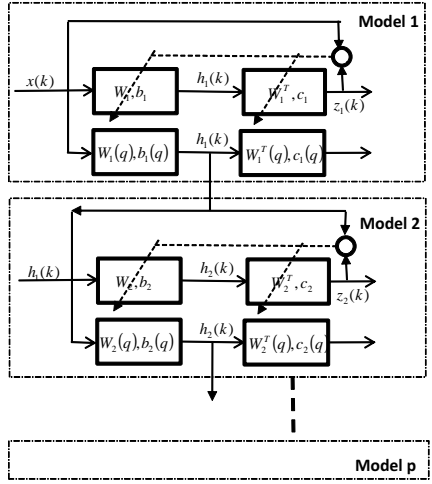}%
\caption{Cascade traning of multipe RBMs}%
\label{f2}%
\end{center}
\end{figure}

If $\mathbf{x}$ is encoded into binary values as (\ref{bin}), \textit{i.e.,
}$x_{t}\in\left\{  0,1\right\}  ,$ the binary hidden units are\textit{\ }%
$h_{p}\in\left\{  0,1\right\}  ,$%
\begin{equation}%
\begin{array}
[c]{l}%
p\left(  h_{p}=1|\mathbf{x}\right)  _{p=1\cdots l_{i}}=\phi\left[
W_{p}\mathbf{x}+b_{p}\right] \\
p\left(  x_{t}=1|h\right)  _{t=1\cdots l_{i-1}}=\phi\left[  W_{t}^{T}%
h+c_{t}\right]
\end{array}
\label{pnn}%
\end{equation}

If the input $\mathbf{x}$ uses continuous value. We first normalize
$\mathbf{x}$ in $[0,1].$ The conditional probability for the $j$-th visible
node is%
\[
P(x_{j}|h)=\frac{e^{(V_{j}^{T}\overline{h}+c_{j})x_{j}}}{\int_{\widehat{x}%
_{j}}e^{(V_{j}^{T}\overline{h}+c_{j})\widehat{x}_{j}}d\widehat{x}_{j}}%
\]
Since $x_{j}\in\lbrack0,1],$ the probability distribution is $P(x_{j}%
|h)=\frac{a_{j}e^{a_{j}x_{j}}}{e^{a_{j}}-1},$ $a_{j}=V_{j}^{T}\overline
{h}+c_{j}$. The accumulative conditional probability from sampling process is
\begin{equation}
P_{C}(x_{j}|h)=\frac{e^{a_{j}x_{j}}-1}{e^{a_{j}}-1} \label{cpd2}%
\end{equation}
The expected value of the distribution is $E[x_{j}]=\frac{1}{1-e^{-a_{j}}%
}-\frac{1}{a_{j}}$.

\subsection{Joint distribution $p(\mathbf{x},y)$}

Here we will discuss how to maximize the joint probability $J_{c}(D)=\sum
_{D}\log p(\mathbf{x,}y)$ by\ RBM training. The parameters of RBMs are updated
as
\begin{equation}
\Lambda\left(  k+1\right)  =\Lambda\left(  k\right)  -\eta_{2}\frac
{\partial\left[  \log p(\mathbf{x,}y)\right]  }{\partial\Lambda} \label{up1}%
\end{equation}
where $\eta_{2}$ is the learning rate. By the chain rule $p(\mathbf{x}%
,y,h)=p(h|\mathbf{x},y)p(\mathbf{x,}y)$,%
\begin{equation}
\frac{\partial\log p(\mathbf{x,}y)}{\partial\Lambda}=E_{(\mathbf{x}%
,y,h)}\left[  \frac{\partial E(\mathbf{x},y,h)}{\partial\Lambda}\right]
-E_{(h|\mathbf{x},y)}\left[  \frac{\partial E\left(  \mathbf{x},y,h\right)
}{\partial\Lambda}\right]  \label{upj}%
\end{equation}
\bigskip where $h$ is the hidden variable to capture the relationship between
$\mathbf{x}$ and $y$.

The expectations $E_{(\mathbf{x},y,h)}$ and $E_{(h|\mathbf{x},y)}$ cannot be
computed directly. The contrastive divergence (CD) approach \cite{Hinton} is
applied in this paper. From the starting point $\left[  \mathbf{x}%
(k),y(k)\right]  $, we sample the hidden state $h$ using this trigger. The
conditional probability $p(h|\mathbf{x},y)$ uses the sampling processes of
$p(\mathbf{x|h})$ and $p(\mathbf{y|h}).$ This process is repeated $\kappa$
times. These conditional probabilities are%
\begin{equation}%
\begin{array}
[c]{l}%
p(h|\mathbf{x},y)=%
{\displaystyle\prod\limits_{j}}
p(h_{j}|\mathbf{x},y)\\
p(\mathbf{x|}h)=%
{\displaystyle\prod\limits_{\kappa}}
p(x_{i}|h)\\
p(y\mathbf{|}h)=%
{\displaystyle\prod\limits_{\kappa}}
\left(  y_{i}|h\right)
\end{array}
\label{xyhprob}%
\end{equation}
which are used as the intermediate results in CD.

If $\mathbf{x}$ and $y$ are encoded into binary values as (\ref{bin}),%
\begin{equation}%
\begin{array}
[c]{l}%
p(h_{j}=1|\mathbf{x},\mathbf{y})=\text{sign}\left(  c_{j}+\sum_{\kappa
}V_{j\kappa}y_{\kappa}+\sum_{i}W_{ji}x_{i}\right) \\
p(x_{i}=1|\mathbf{h})=\text{sign}\left(  b_{i}+\sum_{j}W_{ji}h_{j}\right) \\
p(y_{\kappa}=1|\mathbf{h})=\text{sign}\left(  d_{\kappa}+\sum_{j}V_{j\kappa
}h_{j}\right)
\end{array}
\label{indProb}%
\end{equation}
The binary encoding method causes dramatically large training data. The
dimension of $\mathbf{x}(k)$ increase from $n$ to $2^{n\times r}$.

If $\mathbf{x}$ and $y$ are used as continuous values. The above three
conditional probabilities are calculated as follows.

1) The conditional probability of $\mathbf{x}\left(  k\right)  $ given $h.$
\[
p(\mathbf{x}|h)=\frac{p(\mathbf{x},h)}{p(h)}=\frac{\int_{\bar{y}}%
p(\mathbf{x},h,\bar{y})d\bar{y}}{\int_{\bar{y}}\int_{\mathbf{\bar{x}}%
}p(\mathbf{\bar{x}},h,\bar{y})d\mathbf{\bar{x}}d\bar{y}}=\frac{e^{h^{T}%
W\mathbf{x}+b^{T}\mathbf{x}}}{\int_{\mathbf{\bar{x}}}e^{h^{T}W\mathbf{\bar{x}%
}+b^{T}\mathbf{\bar{x}}}d\mathbf{\bar{x}}}=\prod_{i}p(\bar{x}_{i}|h)
\]
where $\bar{h}$, $\bar{y}$ and $\mathbf{\bar{x}}$ denote the silent variables
of $h$, $y$ and $\mathbf{x.}$%
\begin{equation}
p(x_{i}|h)=e^{x_{i}\left(  b_{i}+\sum_{j}w_{ji}h_{j}\right)  }/\int_{\bar
{x}_{i}}e^{\bar{x}_{i}\left(  b_{i}+\sum_{j}w_{ji}h_{j}\right)  }d\bar{x}_{i}
\label{xprobhconesp}%
\end{equation}
We explore three different cases for the domain of $x_{i}$: $\left[
0,\infty\right)  ,$ $\left[  0,1\right]  $ and $\left[  -\delta,\delta\right]
$ where $\delta\in\Re^{+}.$ For the case of $x_{i}\in$ $\left[  0,\infty
\right)  ,$ if we define $\alpha_{i}(h)=b_{i}+\sum_{j}w_{ji}h_{j},$ we can
directly evaluate the integral taking into account that $\alpha_{i}(h)<0,$
$\forall h.$ In order to ensure that the integral converges, the evaluations
of three integrals are presented in Table 1.%

\begin{align*}
&  \text{Table 1.- Probability expressions for }p(x|h)\\
&
\begin{tabular}
[c]{|l|l|l|l|}\hline
$\text{Interval}$ & $\left[  0,\infty\right)  $ & $\left[  0,1\right]  $ &
$\left[  -\delta,\delta\right]  $\\\hline
$p(x_{i}|h)$ & $-\alpha_{i}e^{\alpha_{ix_{i}i}}$ & $\frac{\alpha_{i}%
e^{\alpha_{i}x_{i}}}{e^{\alpha_{i}}-1}$ & $\frac{\alpha_{i}e^{\alpha_{i}x_{i}%
}}{e^{\delta\alpha_{i}}-e^{-\delta\alpha_{i}}}$\\\hline
$P_{c}(x_{i}|h)$ & $1-e^{\alpha_{i}x_{i}}$ & $\frac{e^{\alpha_{i}x_{i}}%
-1}{e^{\alpha_{i}}-1}$ & $\frac{e^{\alpha_{i}\bar{x}_{i}}-e^{-\delta\alpha
_{i}}}{e^{\delta\alpha_{i}}-e^{-\delta\alpha_{i}}}$\\\hline
$E\left[  x_{i}h\right]  $ & $-\frac{1}{\alpha_{i}}$ & $\frac{1}%
{1-e^{-\alpha_{i}}}-\frac{1}{\alpha_{i}}$ & $\delta\frac{e^{\delta\alpha_{i}%
}+e^{-\delta\alpha_{i}}}{e^{\delta\alpha_{i}}-e^{-\delta\alpha_{i}}}-\frac
{1}{\alpha_{i}}$\\\hline
\end{tabular}
\end{align*}

2) Probability of $y$ given $h.$%
\begin{equation}
p(y|h)=\frac{p(y,h)}{p(h)}=\frac{\int_{\bar{x}_{i}}p(\mathbf{x},h,\bar
{y})d\mathbf{\bar{x}}}{\int_{\bar{y}}\int_{\mathbf{\bar{x}}}p(\mathbf{\bar{x}%
},h,\bar{y})d\mathbf{\bar{x}}d\bar{y}}==\frac{e^{h^{T}Vy+D^{T}y}}{\int
_{\bar{y}}e^{h^{T}V\bar{y}+d^{T}\bar{y}}d\bar{y}}=\prod_{\kappa}p(y_{\kappa
}|h) \label{yprobhcont}%
\end{equation}
where%
\begin{equation}
p(y|h)=\frac{e^{(h^{T}V+D)y}}{\int_{\bar{y}}e^{(h^{T}V^{\prime}+D)\bar{y}%
}d\bar{y}} \label{yprobhcontesp}%
\end{equation}
If we define $\gamma(h)=h^{T}V+D,$ the evaluations of $p(y|h)$ are presented
in Table 2.%

\begin{align*}
&  \text{Table 2.}\ \text{Probability expressions for }p(y|h)\\
&
\begin{tabular}
[c]{|l|l|l|l|}\hline
$\text{Interval}$ & $\left[  0,\infty\right)  $ & $\left[  0,1\right]  $ &
$\left[  -\delta,\delta\right]  $\\\hline
$p(y|h)$ & $-\gamma e^{\gamma y}$ & $\frac{\gamma e^{\gamma y}}{e^{\gamma}-1}$
& $\frac{\gamma e^{\gamma y}}{e^{\delta\gamma}-e^{-\delta\gamma}}$\\\hline
$P_{c}(y|h)$ & $1-e^{\gamma y}$ & $\frac{e^{\gamma y}-1}{e^{\gamma}-1}$ &
$\frac{e^{\gamma y}-e^{-\delta\gamma}}{e^{\delta\gamma}-e^{-\delta\gamma}}%
$\\\hline
$E\left[  y|h\right]  $ & $-\frac{1}{\gamma}$ & $\frac{1}{1-e^{-\gamma}}%
-\frac{1}{\gamma}$ & $\delta\frac{e^{\delta\gamma}+e^{-\delta\gamma}%
}{e^{\delta\gamma}-e^{-\delta\gamma}}-\frac{1}{\gamma}$\\\hline
\end{tabular}
\end{align*}
\bigskip

3) Probability of $h$ given $\mathbf{x}$ and $y.$ The hidden units $h_{j}$
have binary values, while $\mathbf{x}$ and $y$ have continuous values. So
$p(h|\mathbf{x},y)$ is%
\begin{equation}
p(h|\mathbf{x},y)=\frac{p(\mathbf{x,}y\mathbf{,}h)}{p(\mathbf{x,}y)}=\prod
_{j}p(h_{j}|\mathbf{x,}y)\nonumber
\end{equation}
where $v_{j}$ denotes the $j$-th element of the vector $V$.

To calculate the gradient (\ref{upj}), we use following modified CD algorithm.

\textit{Algorithm 1:}

1.- Take a training pair $x(k),y(k)$

2.- Initialize $x_{1}=x(k)$ and$,y_{1}=y(k)$

3.- Sample $h_{1}$ from $p(h|x_{1},y_{1})$

4.- Sample $x_{2}$ and$,y_{2}$ from $p(x|h_{1})$ and $p(y|h_{1})$

5.- Sample $h_{2}$ from $p(h|x_{2},y_{2})$

This algorithm replaces the expectations with the $1$- steps Gibbs sampling
process. This process is initiated by pre-training of $p(\mathbf{x})$ as the
initial weights.

\section{RBM training for conditional probability}

The\ unsupervised pre-training (\ref{up3}) generates the probability
distribution and extracts the features of the input $\mathbf{x}\left(
k\right)  $. The supervised learning (\ref{up1}) can only obtain the
sub-optimal model for the system identification index (\ref{maxJ1}). However,
the training of (\ref{up1}) discussed in the above is relatively simple.

Now we use the supervised learning to obtain the conditional distribution
$p\left(  y|\mathbf{x}\right)  $ via RBM training. The parameters of RBMs are
updated as
\begin{equation}
\Lambda\left(  k+1\right)  =\Lambda\left(  k\right)  -\eta_{3}\frac
{\partial\left[  \log p\left(  y|\mathbf{x}\right)  \right]  }{\partial
\Lambda} \label{up2}%
\end{equation}
where $\eta_{3}>0$ is the training factor, $\Lambda=\left\{
W,b,c,d,V\right\}  ,$ $\frac{\partial\log p\left(  y|\mathbf{x}\right)
}{\partial\Lambda}$ will be calculated as follows. This is the final goal of
the nonlinear system modeling (\ref{maxJ}). Because $\log p(y\mathbf{|x})=\log
p(\mathbf{x,}y)-\log p(\mathbf{x})$, from (\ref{engfun})
\begin{equation}%
\begin{array}
[c]{l}%
p\left(  y|\mathbf{x}\right)  =\frac{p(\mathbf{x,}y)}{p(\mathbf{x})}=%
{\displaystyle\sum\limits_{h}}
e^{-E(\mathbf{x},y,h)}/%
{\displaystyle\sum\limits_{y,h}}
e^{-E(\mathbf{x},y,h)}\\
\log p\left[  \mathbf{x}(k),y(k)\right]  =\log\sum_{h}e^{-E\left[
\mathbf{x}(k),y(k),h\right]  }-\log\sum_{y,h}e^{-E\left[  \mathbf{x}%
(k),y(k),h\right]  }%
\end{array}
\label{conditionalprob}%
\end{equation}
So the gradient of the negative log-likelihood with respect to the parameter
$\Lambda$ is%
\begin{equation}
-\frac{\partial\log p\left(  y|\mathbf{x}\right)  }{\partial\Lambda}%
=\frac{\sum_{h}e^{-E\left[  \mathbf{x},y,h\right]  }\frac{\partial E\left[
\mathbf{x},y,h\right]  }{\partial\Lambda}}{\sum_{h}e^{-E(\mathbf{x},y,h)}%
}-\frac{\sum_{y,h}e^{-E\left[  \mathbf{x},y,h\right]  }\frac{\partial E\left[
\mathbf{x},y,h\right]  }{\partial\Lambda}}{\sum_{y,h}e^{-E\left[
\mathbf{x},y,h\right]  }} \label{logPXYexpr}%
\end{equation}
In the form of mathematical expectation, the gradient is
\begin{equation}
-\frac{\partial\log p\left(  y|\mathbf{x}\right)  }{\partial\Lambda
}=E_{(h|\mathbf{x},y)}\left[  \frac{\partial E(\mathbf{x},y,h)}{\partial
\Lambda}\right]  -E_{(y,h|\mathbf{x})}\left[  \frac{\partial E(\mathbf{x}%
,y\mathbf{,}h)}{\partial\Lambda}\right]  \label{me}%
\end{equation}
Both probability expectations of (\ref{me}) can be computed using Gibbs
sampling and CD algorithm. The CD algorithm needs alternation sampling
processes over the distributions (\ref{xyhprob}). However, there are not
compact expressions for $p(h|\mathbf{x},y)$ and $p(y,h|\mathbf{x}).$ In order
to implement the CD\ algorithm for $E_{(y,h|\mathbf{x})}\left[  \frac{\partial
E(\mathbf{x},y\mathbf{,}h)}{\partial\Lambda}\right]  ,$\ we calculate
$p(y,h|\mathbf{x})$ as%
\begin{equation}
p(y,h|\mathbf{x})=\frac{e^{-E(\mathbf{x},y,h)}}{\sum_{y,h}e^{-E(\mathbf{x}%
,y,h)}}=\frac{e^{h^{T}W\mathbf{x}+b^{T}\mathbf{x}+c^{T}h+dy+h^{T}Vy}}%
{\sum_{y,h}e^{h^{T}W\mathbf{x}+b^{T}\mathbf{x}+c^{T}h+dy+h^{T}Vy}dy}
\label{yhprobxBin}%
\end{equation}
The calculation of (\ref{yhprobxBin}) is expensive, because it requires\ to
calculate more than $2^{m+l}$ possible values. It is tractable for system
identification. After this distribution is obtained, we just should sample all
possible values for $y$ and $h$. Once $p(y,h|\mathbf{x})$ is calculated, we
need $p(\mathbf{x}|y,h)$ to complete the Gibbs sampling as%
\begin{equation}
p(\mathbf{x}|y,h)=\frac{p(\mathbf{x,}y\mathbf{,}h)}{p(y\mathbf{,}h)}=\prod
_{i}p(\bar{x}_{i}|h) \label{xyh}%
\end{equation}
The calculation of $\frac{\partial\log p\left(  y|\mathbf{x}\right)
}{\partial\Lambda}$ given by (\ref{me}) cannot be done directly. We must use
again the CD algorithm. The term $E_{h|(\mathbf{x},y)}\left[  \frac{\partial
E(\mathbf{x},y,h)}{\partial\Lambda}\right]  $ can be computed as%
\begin{equation}%
\begin{array}
[c]{l}%
E_{(\mathbf{x},y,h)}\left[  \frac{\partial E(\mathbf{x},y,h)}{\partial\Lambda
}\right]  \approx\frac{\partial E(\mathbf{x}_{2},y_{2},h_{2})}{\partial
\Lambda}\\
E_{(h|\mathbf{x},y)}\left[  \frac{\partial E\left(  \mathbf{x},y,h\right)
}{\partial\Lambda}\right]  \approx\frac{\partial E(\mathbf{x},y,h_{2}%
)}{\partial\Lambda}%
\end{array}
\label{expexp}%
\end{equation}

In this paper, we modify the learning algorithm of RBM, such that nonlinear
system can be modeled by continuous values. In order to train the parameters
in (\ref{up2}), we need the three conditional probabilities discussed in the
above session, and the following three conditional probabilities.

4) Probability of $y$ given $\mathbf{x.}$ We assume that $y$ is a scalar, the
bias variable $d$ is a real number, the weight matrix $V$ is a real vector.%
\begin{equation}
p(y|\mathbf{x})=\frac{p(\mathbf{x,}y)}{p(\mathbf{x})}=\frac{\sum_{\bar{h}%
}p(\mathbf{x},y,\bar{h})}{\int_{\bar{y}}\sum_{\bar{h}}p(\mathbf{x},\bar
{y},\bar{h})d\bar{y}}=\frac{\sum_{\bar{h}}e^{\bar{h}^{T}W\mathbf{x}%
+b^{T}\mathbf{x}+c^{T}\bar{h}+dy+\bar{h}^{T}Vy}}{\int_{\bar{y}}\sum_{\bar{h}%
}e^{h^{T}W\mathbf{x}+b^{T}\mathbf{x}+c^{T}\bar{h}+d\bar{y}+\bar{h}^{T}V\bar
{y}}d\bar{y}} \label{yprobxcont}%
\end{equation}
Using Fubini's Theorem, the integral and the sum are%
\begin{equation}
p(y\mathbf{|x})=\frac{e^{dy}\prod_{j}\left(  1+e^{\tau_{j}(\mathbf{x,}%
y)}\right)  }{\int_{\bar{y}}e^{d\bar{y}}\prod_{j}\left(  1+e^{\tau
_{j}(\mathbf{x,}\bar{y})}\right)  d\bar{y}} \label{yprobxred}%
\end{equation}

5) Probability of $\mathbf{(}y,h\mathbf{)}$ given $\mathbf{x.}$ The second
term of the negative log-likelihood (\ref{logPXYexpr}) can be computed as%
\begin{equation}
p\left[  (y\mathbf{,}h)|\mathbf{x}(k)\right]  =\frac{e^{-E(\mathbf{x}%
(k),\bar{y},\bar{h})}}{\sum_{\bar{y},\bar{h}}e^{-E(\mathbf{x}(k),\bar{y}%
,\bar{h})}}=\frac{e^{h^{T}W\mathbf{x}(k)+b^{T}\mathbf{x}(k)+c^{T}h+dy+h^{T}%
Vy}}{\int_{\bar{y}}\sum_{\bar{h}}e^{h^{\prime T}W\mathbf{x}(k)+b^{T}%
\mathbf{x}(k)+c^{T}\bar{h}+d\bar{y}+h^{\prime T}V\bar{y}}d\bar{y}}
\label{yhprobx}%
\end{equation}
The integral of the denominator is expanded as $\int_{\bar{y}}e^{d\bar{y}%
}\prod_{j}\left(  1+e^{\tau_{j}(\mathbf{x}(k),\bar{y})}\right)  d\bar{y}$. In
order to find a closed form of the solution, we define $\digamma=\{\tau
_{1,}\tau_{2},...,\tau_{n}\}$, which is incomplete power set $P(\digamma)$,
because the empty set is not included in $P(\digamma)$. Associated with
$\digamma$ with elements $P(\digamma)=\{P_{\digamma1},P_{\digamma2},...\}$,
the elements $P_{\digamma i}$ contain all possible combinations of elements
$\tau_{j}$. The finite product $\prod_{j}\left(  1+e^{\tau_{j}(\mathbf{x}%
(k),\bar{y})}\right)  $ can be expressed as%
\begin{equation}
\prod_{j}\left(  1+e^{\tau_{j}}\right)  =1+\sum_{P_{\digamma i}}e^{\sum
\tau_{\gamma}}\text{ } \label{infProd}%
\end{equation}
where $\gamma$ is an index for $\tau,$ which takes values such that
$\tau_{\gamma}\in P_{\digamma i}$. The integral then becomes $\int_{\bar{y}%
}e^{d\bar{y}}\left(  1+\sum_{P_{\digamma i}}e^{\sum\tau_{\gamma}%
(\mathbf{x}(k),\bar{y})}\right)  d\bar{y}.$ Because $\tau_{j}=\sum_{i}%
w_{ji}\bar{x}_{i}+v_{j}y+c_{j},$ define the vector $w_{j}=\left[
w_{j1}...w_{jl}\right]  ,$ then $\tau_{j}=w_{j}x+v_{j}y+c_{j}$. Considering
the expression for $\tau_{j},$ the value of the integral is%
\begin{equation}
\int_{\bar{y}}\left(  e^{d\bar{y}}+\sum_{P_{\digamma i}}e^{\sum w_{\gamma
}\mathbf{x}(k)+c_{\gamma}}e^{\left(  d+\sum v_{\gamma}\right)  \bar{y}%
}\right)  d\bar{y} \label{denoExp2}%
\end{equation}
For the interval $[0,\infty),$%
\begin{equation}
-\frac{1}{D}-\sum_{P_{\digamma i}}\frac{1}{D+\sum v_{\gamma}}e^{\sum
w_{\gamma}\mathbf{x}(k)+c_{\gamma}} \label{int0inf}%
\end{equation}
For the interval $[0,1],$%
\begin{equation}
\frac{1}{d}\left(  e^{D}-1\right)  +\sum_{P_{\digamma i}}\frac{e^{\sum
w_{\gamma}\mathbf{x}(k)+c_{\gamma}}}{F+\sum v_{\gamma}}\left(  e^{D+\sum
v_{\gamma}}-1\right)  \label{int01}%
\end{equation}
For the interval $[-\delta,\delta],$%
\begin{equation}
\frac{1}{d}\left(  e^{D\delta}-e^{-D\delta}\right)  +\sum_{P_{\digamma i}%
}\frac{e^{\sum w_{\gamma}\mathbf{x}(k)+c_{\gamma}}}{D+\sum v_{\gamma}}\left(
e^{\left(  D+\sum v_{\gamma}\right)  \delta}-e^{-\left(  D+\sum v_{\gamma
}\right)  \delta}\right)  \label{intmdd}%
\end{equation}
The sum $\sum_{P_{\digamma i}}$ is performed along the elements of the power
set which is computational expensive. The number of elements is $2^{n},$ which
represents all possible combinations. For system identification, the number of
visible and hidden units is no so big, so the procedure becomes tractable.

6) Probability of $\mathbf{x}$ given $(y\mathbf{,}h).$ We have shown that
$p\left[  \mathbf{x}|(y\mathbf{,}h)\right]  =\prod_{i}p(x_{i}|h).$ In the
intervals $\left[  0,\infty\right)  ,$ $\left[  0,1\right]  $ and $\left[
-\delta,\delta\right]  ,$ we get the same expressions presented in Table 1 for
$p(\mathbf{x|}h)$.

Finally we use the following CD algorithm to calculate $E_{h|(\mathbf{x}%
,y)}\left[  \frac{\partial E(\mathbf{x},y,h)}{\partial\Lambda}\right]  $.

\textit{Algorithm 2}

1.- Take a training pair $x(k),y(k)$

2.- Initialize $x_{a}=x(k)$

3.- Sample $y_{a}$ and $h_{a}$ from $p\left[  (y\mathbf{,}h)|\mathbf{x}%
_{a}\right]  $

4.- Sample $x_{b}$ from $p\left[  \mathbf{x}|(y_{a}\mathbf{,}h_{a})\right]  $

5.- Sample $y_{b}$ and $h_{b}$ from $p\left[  (y\mathbf{,}h)|\mathbf{x}%
_{b}\right]  $

\section{Simulations}

In this section, we use two benchmark examples to show the effectiveness of
the conditional probability based system identification.

\textbf{Gas furnace process}

One of the most utilized benchmark examples in system identification is the
famous gas furnace \cite{Jenkins}. The air and methane are mixed to create gas
mixture which contains carbon dioxide. The control $u(k)$ is methane gas, the
output $y(k)$ is $CO_{2}$ concentration. This process is modeled as
(\ref{planta}). In this paper, we use the simulation model, \textit{i.e.},
only the input $u(k)$ is used to obtain the modeled output,
\begin{equation}
\hat{y}(k)=NN[u\left(  k\right)  ,\cdots u\left(  k-n_{u}\right)  ]^{T}
\label{sim}%
\end{equation}
This model is more difficult than the following prediction model, who uses
both the input and the past output $y(k),$
\begin{equation}
\hat{y}(k)=NN[y\left(  k-1\right)  ,\cdots y\left(  k-n_{y}\right)  ,u\left(
k\right)  ,\cdots u\left(  k-n_{u}\right)  ]^{T} \label{pre}%
\end{equation}
The big advantage of the simulation model (\ref{sim}) is the on-line
measurement of the gas furnace output is not needed.

The gas furnace are sampled continuously in $9$ second intervals. The data set
is composed of $296$ successive pairs of $[u(k),y(k)]$. $200$ samples are used
as the training data, the rest $96$ samples are for the testing. We compare
our conditional probability calculation with restricted Boltzmann machine
(RBM-C) and the joint distribution with restricted Boltzmann machine (RBM-J),
to the feedforward neural networks, multilayer perceptrons (MLP), and the
support vector machine (SVM). The MLP has the same structure as the RBMs,
\textit{i.e.}, the same hidden layer and hidden node numbers. The SVMs use
three types of kernel: polynomial kernel (SVM-P), radial basis
function\ kernel (SVM-R), linear kernel (SVM-L).

We use the random search method \cite{Bergstra1} to determine\ how many
RBMs\ we need. The search ranges of the RBM number $l$ is $10\geq l\geq2,$ the
hidden node number $p$ is $40\geq p\geq5.$ The random search results are,
$l=3$, $p=20$. Three RBMs are used and each hidden layer has $30$ nodes. The
following steps are applied for RBM training.

a) Binary encoding and decoding. We used two resolutions, $4$ bits and $8$
bits, for $\mathbf{x}$ and $y$. The resolutions of the input are
$4\times\left(  n_{y}+n_{u}\right)  $ and $8\times\left(  n_{y}+n_{u}\right)
.$ The output data are sampled from $p(y|\mathbf{x})$ and decoded to
continuous equivalent values.

b) Probability distribution with continuous values. The data are normalized
into the interval $[0,1]$,
\begin{equation}
\mathbf{x}\left(  k\right)  =\frac{\mathbf{x}\left(  k\right)  -\min
_{k}\left\{  \mathbf{x}\left(  k\right)  \right\}  }{\max\left\{
\mathbf{x}\left(  k\right)  \right\}  -\min_{k}\left\{  \mathbf{x}\left(
k\right)  \right\}  },y=\frac{y-y_{\max}}{y_{\min}-y_{\max}} \label{normi}%
\end{equation}

c) Training: the RBM is trained using the coded data or continuous values. The
learning rates are $\eta_{1}=\eta_{2}=\eta_{3}=0.01$. Stochastic gradient
descents (\ref{up3}), (\ref{up1}), (\ref{up2}) are applied over the data set.
The algorithm use $10$ training epochs.

If we use the prediction model (\ref{pre}) to describe the gas furnace
process, with $n_{y}=1$ and $n_{u}=5,$ MLP, SVM, RBM-J and RBM-C work well.
The mean squared errors (MSE) of the testing data are similar small. In order
to show the noise resistance of the conditional probability methods, we added
noises to the data set,%
\begin{equation}
\mathbf{x}(k)=\mathbf{x}(k)+\mathbf{z}(k) \label{Noisy}%
\end{equation}
where $\mathbf{z}(k)$ is a normal distribution with $0$ average and $0.1$
standard deviation. The testing errors are shown in Table 3.
\begin{align*}
&  \text{Table 3. MSE of different prediction models with noise}\left(
\times10^{-3}\right) \\
&
\begin{tabular}
[c]{|c|c|c|c|c|c|}\hline
MLP & SVM-L & SVM-P & SVM-R & RBM-J & RBM-C\\\hline
$30.03$ & $23.01$ & $26.7$ & $20.70$ & $11.12$ & $8.05$\\\hline
\end{tabular}
\end{align*}

The probabilistic models have great advantage over MLP and SVM with respect to
noises and disturbances. The main reason is that we model the probability
distributions of the input and output, the noises and outliers in the data do
not affect the conditional distributions significantly.

Now we use the simulation model (\ref{sim}) to compare the above models. When
$n_{u}=5,$ MLP and SVM cannot model the process, RBM can but the MSE of the
testing error is about $100\times10^{-3}$. When $n_{u}=15,$ all models work.
The testing errors are shown in Table 4.%

\begin{align*}
&  \text{Table 4. MSE of different simulation models without noise }\left(
\times10^{-3}\right) \\
&
\begin{tabular}
[c]{|c|c|c|c|c|c|}\hline
MLP & SVM-L & SVM-P & SVM-R & RBM-J & RBM-C\\\hline
$81.7$ & $56.3$ & $65.2$ & $62.1$ & $26.32$ & $19.52$\\\hline
\end{tabular}
\end{align*}
Besides the supervised learning, RBMs can extract the input features with the
unsupervised learning. So RBMs work better than MLP and SVM when the input
vector $\mathbf{x}(k)$ does not include previous output $y\left(  k-1\right)
.$

To show the effectiveness of the deep structure and the binary encoding
method, we use $l$ RBMs. The testing errors are given in Table 5.%

\begin{align*}
&  \text{Table 5. MSE of different RBMs-C }\left(  \times10^{-3}\right) \\
&
\begin{tabular}
[c]{|l|l|l|l|l|}\hline
MSE & $1$ RBM & $2$ RBMs & $3$ RBMs & $4$ RBMs\\\hline
4 bits & $87.62$ & $32.36$ & $21.2$ & $31.67$\\\hline
8 bits & $74.61$ & $29.21$ & $19.5$ & $25.75$\\\hline
\end{tabular}
\end{align*}

By adding new feature extraction block, the MSE drops significantly. If the
number\ of RBM is more than $3,$ the MSE becomes worse. This means that it is
not necessary to add new RBM to extract information.

Both $4$ bits and $8$ bits encoding have good approximation results, see
Figure \ref{FigOutGas}. The high precise encoding helps to improve the model
accuracy. However, adding one bit in the encoding procedure immediately
doubles the computation time.%

\begin{figure}
[ptb]
\begin{center}
\includegraphics[
height=2.4242in,
width=4.1104in
]%
{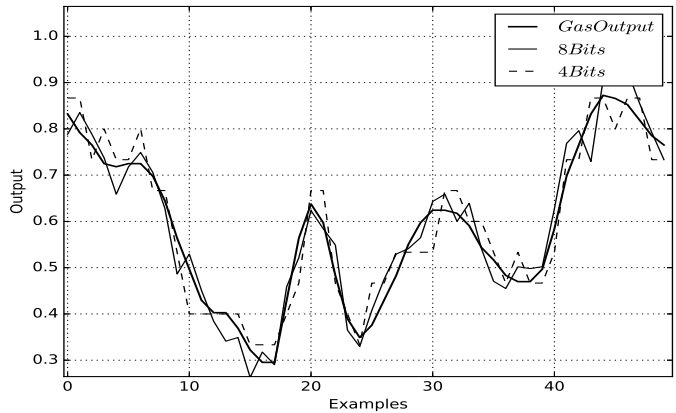}%
\caption{Modeling the gas furnace using conditional probability and binary
encoding.}%
\label{FigOutGas}%
\end{center}
\end{figure}

Finally, we test the continuous valued algorithm. We use the simulation model
(\ref{sim}). The training parameters are $k_{G}=1,$ $\eta_{1}=0.01,$ and $100$
training epochs for each RBM. For the output layer, the coded features have
$k_{G}=1,$ and $10$ training epochs. The testing MSE is $12.5\times10^{-3}.$
It is better than binary encoding, because it provides more information on
real axis.

\textbf{Wiener-Hammerstein system}

Wiener-Hammerstein system \cite{Schoukens} has a static nonlinear part
surrounded by two dynamic linear systems. There are $188,000$ input/output
pairs, defined $u(k)$ and $y\left(  k\right)  $. We use $100,000$ samples for
training, $88,000$ samples for testing. This is a big data modelling problem.

We use the simulation model (\ref{sim}) to compare these models, with
$n_{u}=15,$ $n_{y}=0.$ We only show the feature extraction of the proposed
method. The random search method uses the RBM number as $20\geq l\geq2,$ and
the hidden node number as $70\geq p\geq10.$ Then RBM number $l=5,$ the hidden
node number $p=50.$ The RBMs are trained using the Gibbs sampling, $k_{G}=1$,
the learning rates are $0.01$. It has $10$ training epochs. We use the same
structure for the MLP, five hidden layers, each layer has $50$ nodes.

Table 6 shows the testing errors of different methods for the simulation model
(\ref{sim}).%

\[%
\begin{array}
[c]{c}%
\text{Table 6. MSE of different simulation models }\left(  \times
10^{-3}\right) \\%
\begin{tabular}
[c]{|c|c|c|c|c|}\hline
MLP & SVM-L & SVM-P & SVM-R & RBM-C\\\hline
$56.03$ & $43.01$ & $48.01$ & $35.71$ & $12.70$\\\hline
\end{tabular}
\end{array}
\]
We can see that even for the best result obtained by the SVM-R, RBM-C is much
better than the others when the input to the models is only control $u(k).$

Now we show how does the training data seize affect the conditional
probability modelling. The stochastic gradient descent is a batch process. The
$100,000$ training samples are divided into several packages. All packages
have the same size. The package seizes are selected as $500,$ $1,000$ and
$5,000.$ The probability distributions are calculated with: binary encoding,
in the interval $[0,\infty),$ in the interval $[0,1),$ and in the interval
$[-\delta,\delta].$

We can see that the RBMs cannot model the probability distribution properly
with small batch number, for example $60$ batches. The training error with the
size $500$ is little bigger than the size $1,000.$ The fluctuations of the
training error with the size $5,000$ vanish, and the interval $[-\delta
,\delta]$ becomes unstable. So large batch seize may affect the distributions
by some mislead samples.

\section{Conclusions}

In this paper, the conditional probability based method is applied for
nonlinear system modelling. We show that this method is better than the other
data based models when there are noises and the previous outputs are not
available on-line. The modelling accuracy is satisfied with the binary
encoding and continuous values by the modified RBMs. The training algorithms
are obtained from maximizing the conditional likelihood of the data set. Two
nonlinear modelling problems are used to validate the proposed methods. The
simulation results show that the modelling accuracies are improved a lot when
there are noises and the simulation models have to be used.

\end{document}